\newcolumntype{P}[1]{>{\centering\arraybackslash}p{#1}} 
\newtheorem{teo}{Theorem}[section]
\newtheorem{lemma}{Lemma}[section]
\journal{Information Processing Letters}
\begin{document}

\begin{frontmatter}

\title{Choosability in bounded sequential list coloring}

\author[rva,rvt]{Simone Gama\corref{cor1}}
\ead{simone.gama@icomp.ufam.edu.br}

\author[rva,rvt]{Rosiane de Freitas}
\ead{rosiane@icomp.ufam.edu.br}

\author[rva,rvt]{M\'ario Salvatierra}
\ead{mario@icomp.ufam.edu.br}


\address[rva]{UFAM - Federal University of Amazonas, Manaus-AM, Brazil}


\address[rvt]{Institute of Computing, ICOMP-UFAM}

\begin{abstract}
The list coloring problem is a variation of the classical vertex coloring problem, extensively studied in recent years, where each vertex has a restricted list of allowed colors, and having some variations as the $(\gamma,\mu)$-coloring, where the color lists have sequential values with known lower and upper bounds. This work discusses the choosability property, that consists in determining the least number $k$ for which it has a proper list coloring no matter how one assigns a list of $k$ colors to each vertex. This is a $\Pi_2^P$-complete problem, however, we show that $k$-$(\gamma,\mu)$-choosability is an $NP$-problem due to its relation with the $k$-coloring of a graph and application of methods of proof in choosability for some classes of graphs,  such as complete bipartite graph, which is $ 3 $-choosable, but $ 2 $-$(\gamma,\mu)$-choosable.
\end{abstract}

\begin{keyword}
	
algorithms \sep list coloring \sep  $(\gamma,\mu)$-coloring \sep choosability in graphs \sep computational complexity.

\end{keyword}

\end{frontmatter}


\section{Introduction}\label{introduction}

Let $G=(V,E)$ be a simple graph, where $V$ is the set of vertices and $E$ is the set of edges. A \textit{$k$-coloring} of $G$ is an assignment of colors $\{1, 2, \dots, k\}$ to the vertices of $G$ such that no two adjacent vertices share the same color. A bipartite graph is a graph whose vertices can be divided into two disjoint and independent sets $ W $ and $ V $  such that every edge connects a vertex in $ W $ to one in $ V $. An outerplanar graph is a graph that has a planar drawing for which all vertices belong to the outer face of the drawing \cite{book-chartrand}. The chromatic number $\chi_G$ of a graph is the minimum value of $k$ for which $G$ is $k$-colorable. The classic graph coloring problem, which consists in finding the chromatic number of a graph, is one of the most important combinatorial optimization problems and it is known to be $NP$-complete. There are several versions of this classic vertex coloring problem, involving additional constraints, in both edges as vertices of the graph. One of them is the list coloring problem, where given a graph $G$ there is an associated set $L(v)$ of allowed color lists for each vertex of $v \in V(G)$. If it is possible to get a proper coloring of $G$ with these color lists, then we have a list coloring of $G$. It was introduced by Paul Erd\"{o}s \textit{et al} in 1979~\cite{article-erdos}, and independently by Vizing in 1976~\cite{article-vizing}. The list coloring problem also has its variations, among them the $\mu$-coloring and the $(\gamma,\mu)$-coloring, introduced by Bonomo~\textit{et al.} \cite{article-bonomo-1,bonomo2009exploring,gama2016problemas}. Give a graph $G$ and a function $\gamma,\mu:V(G)\rightarrow \mathbb{N}$ such that $\gamma(v)\leq\mu(v)$ for every $v\in V(G)$, $G$ is $(\gamma,\mu)$-colorable if there exists a coloring $f$ de $G$  such that $\gamma(v)\leq f(v)\leq\mu(v)$ for every $v\in V(G)$. In Table~\ref{table_3:complexidade} we present brief comparative results from the literature in computational complexity of coloring problems for some graph classes.

\begin{table}[h]
	\centering
	\footnotesize{		
		\begin{tabular}{|p{2.2cm}|p{1.6cm}|p{1.6cm}|p{2cm}|p{1.6cm}|}
			\hline
			\centering\textbf{Graph classes}  & \multirow{1}{*}{$k$-coloring}  	& \centering$\mu$-coloring & \centering$(\gamma,\mu)$-coloring & {\centering List coloring} \\ 
			\hline
			General      & NP-C \cite{book-garey}& NP-C \cite{article-bonomo-1} & NP-C \cite{bonomo2009exploring}&  NP-C \cite{dror1999complexity}\\ \hline
			
			Bipartite  & P \cite{bonomo2009exploring}& NP-C \cite{article-bonomo-1}& NP-C \cite{bonomo2009exploring}& NP-C \cite{kubale2004graph}\\ \hline
			
			Split      			& P \cite{golumbic2004algorithmic}& NP-C \cite{bonomo2009exploring} & NP-C \cite{bonomo2009exploring}& NP-C \cite{bonomo2009exploring}\\ \hline
			
			Interval   & P \cite{grotschel1981ellipsoid} & NP-C \cite{bonomo2009exploring}& NP-C \cite{bonomo2009exploring}& NP-C \cite{bonomo2009exploring}\\ \hline
			
			Line of $K_n$& P \cite{konig1916graphen}& NP-C \cite{bonomo2009exploring}& NP-C \cite{bonomo2009exploring}& NP-C \cite{kubale2004graph} \\ \hline
			
			Line of $K_{n,n}$& P \cite{konig1916graphen}& NP-C \cite{bonomo2009exploring}& NP-C \cite{bonomo2009exploring}& NP-C \cite{bonomo2009exploring} \\ \hline
			
			Cographs   & P \cite{grotschel1981ellipsoid} &  P \cite{article-bonomo-1}&   \textit{open}  &  NP-C \cite{jansen1997generalized}\\ \hline
			
			$K_{n,n}$  & P \cite{bonomo2009exploring} &  P \cite{article-bonomo-1}&   P \cite{bonomo2009exploring}  &  NP-C \cite{bonomo2009exploring}\\ \hline
			
			Blocks    & P \cite{bonomo2009exploring} & P \cite{bonomo2009exploring} & P \cite{bonomo2009exploring}& P \cite{bonomo2009exploring}\\ \hline
		\end{tabular}
	\label{table_3:complexidade}
	\caption{Computational complexity between vertex coloring problems.} 
	}
\end{table}

We study algorithms in choosability in graphs (in Section \ref{sec:choos}) and we studied the correlation between the choosability in graphs and $ (\gamma, \mu) $-coloring, thus being $ k $-$ (\gamma, \mu) $-choosable. We show that $ k $-$ (\gamma, \mu) $-choosable for bipartite graphs and outerplanar graphs (in Section \ref{sec:results}) and further, we show that a graph with a $ k $-coloring has a $ k $-$ (\gamma, \mu) $-choosable, which strengthens the results.

\section{Algorithms for choosability in graphs} \label{sec:choos}
Given a graph $G$ and a set $L(v)$ of colors for each vertex $v$, a list of allowed colors, a list coloring of $G$ is a choice function that maps every vertex $v$ to a color in the list $L(v)$. As with vertex coloring problem, a list coloring is generally assumed to be proper, meaning no two adjacent vertices receive the same color. Thus, a graph is $k$-choosable (or $k$-list-colorable) if it has a proper list coloring no matter how one assigns a list of $k$ colors to each vertex. The choice number (or list chromatic number), denoted by $\chi_{\ell}(G)$, is the least number $k$ such that $G$ is $k$-choosable. A more general case is the $f$-choosability. Given a function $f$ of assigning a positive integer $f(v)$ to each vertex $v$, a graph $G=(V,E)$ is $f$-choosable (or $f$-list-colorable) if it has a list coloring no matter how one assigns a list of $f(v)$ colors to each vertex $v$. A particular case happens when $f(v) = k$, for all vertices $v \in V(G)$, when $f$-choosability corresponds to $k$-choosability.

\subsection{Computational complexity for the $k$-choosability}
The $k$-choosability of a graph $G$ is one of the few problems known to be $\Pi_2^P$-complete and this gives the fact that there is an exponential number of possible color lists with arbitrary and not necessarily consecutive values, of certain size (details of $\Pi_2^P$-complete in \cite{book-garey}). Thus, to determine the $k$-choosability, we have to generate all possible instances with lists of size $k$, that is, to check for each possible distribution of color lists of size $k$ to the vertices, an arrangement with repetition. If for a given instance of size $k$ of color lists, there is no solution, then, we repeat the process for color lists of size $k+1$ until the property is satisfied. The aim is to determine the smallest $k$ for which we have a list coloring for all possible instances. An algorithm for $k$-choosable enumerates all color lists of size $ k $ to check if the graph is colorable for all these lists. The solutions are formed by the color lists, taken $ n $ by $ n $, with repeated elements (see Algorithm \ref{pseudo_alg:e_k-choosable}).

\begin{algorithm}[H]\label{pseudo_alg:e_k-choosable}
	\DontPrintSemicolon
	$X\leftarrow \binom{n}{k}$ \label{algo:combinacao} \\
	$P_x \leftarrow$ all $n$-permutations of $X$ with repeated elements;\label{algo:permutacao} \\
	
	\For{$L \in P_x$}
	{
		\If {!(Exists\_List\_Coloring($G, L, not, 1$))}
		{
			\textbf{Return} NOT
		}
	}
	\textbf{Return} YES
	\caption{Is\_$k$-choosable($G=(V,E), k$)}
\end{algorithm}
\vspace{1em}

The previous algorithm uses another algorithm for list coloring problem because it needs to check in each solution whether they have a valid color list (see Algorithm \ref{Pseudo_alg:lista_coloracao_exato}). The algorithm is exponential time, that is, it checks in a graph $ G $ with lists of colors assigned to its vertices all the possibilities of having a coloration. If not, the algorithm will attempt this check with the next set of color lists until there are no more sets to be tested. Thus, this process is a \textit{for all} lists "there exists" a feasible solution, which configures the case of a $\Pi_2^P$-complete problem.

\begin{algorithm}[H]\label{Pseudo_alg:lista_coloracao_exato}
	\DontPrintSemicolon
	\footnotesize
	\If{not exists}
	{\label{alg_exato:existe}
		\uIf {$L= \varnothing$}
		{\label{alg_exato:lista_vazia}
			\If {ExistsColoring($G$)}
			{\label{alg_exato:existe_solucao}
				exists$\leftarrow$YES
			}
		}\Else{
			$l \longleftarrow$ remove the first on the list $L$ \label{alg_exato:remove_lista} \\
			\For{color $\in l$}
			{
				$G.v_i.cor \longleftarrow$ color\label{alg_exato:cor_vertice}\\
				List\_Coloring($G, L, exists, i+1$)\\
				\If{exists}
				{
					\textbf{Break}
				}
			}
		}
	}        
	\caption{Exists\_List\_Coloring($G, L, \&exists, i$)}
\end{algorithm}

\subsection{Additional results applying methods of proof in choosability in graphs}\label{sec:methods}

The work of Woodall \cite{woodall2001list} presents a survey with the main methods of choosability in graphs. These proof methods can be divided into methods that involve digraphs (Alon Tarsi method and kernel's method) and do not involve digraphs (degeneracy in graphs, the method of Hall's theorem and boundary method). The method of Alon and Tarsi was introduced by Alon and Tarsi \cite{alon1992colorings}. The authors using the polynomial graph technique in their proofs. This method is used by Alon and Tarsi to prove that every planar bipartite graph is $ 3 $-choosable. The kernel's method was introduced by Galvin \cite{galvin1995list}. 

The method of degeneracy in graphs is used in some cases of choosability. A graph $ G $ is called $ k $-degenerate if every subgraph $ H $ of $ G $ contains a vertex with degree (in $ H $) at most $ k $. This method is used to show that every planar graph is $ 6 $-choosable and every triangle-free planar graph are $ 4 $-choosable \cite{kratochvil1994algorithmic}. The method of Hall's theorem \cite{hall1948distinct} of the theorem on systems of distinct representatives. This method is used to show that the complete $ k $-partite graph $ K_{2,2,\ldots,2} $ is $ k $-choosable \cite{article-erdos}. Let $S$ be a family of finite sets $S_1 , S_2 , \ldots, S_m$, a system of distinct representatives is a set $x_1 , x_2,\ldots, x_m$ such that $x_i \in S_i,\forall i$ where $x_i \neq x_j,\forall i\neq j$. A family of sets $S_1 , S_2 , \ldots, S_m$ system has a different representatives if and only if for any subset  $I \subseteq \{1, 2, \ldots, m\}$ condition is true $$ |\bigcup _{ i\in I}S_i | \geq |I|$$ The boundary method \cite{thomassen1994every}, define the boundary to comprise those uncolored vertices of $ G $ that are adjacent to colored vertices. This method was introduced in Thomassen's famous article~\cite{thomassen1994every} and was used to show that any planar graph is $ 5 $-choosable. For more results of choosability in planar graphs see \cite{bonomo2009exploring,article-choosable1,article-choosable}.

\section{The $k$-$(\gamma,\mu)$-choosability}\label{sec:results}
\vspace{-.2cm}
We correlate $ k $-choosability with $\left(\gamma,\mu \right)$-coloring. A graph $G$ is $k$-$\left(\gamma,\mu \right)$-choosable or $k$-$\left(\gamma,\mu \right)$-list-colorable, if $G$ is $\left(\gamma,\mu \right)$-colorable for each set $L(V)$ of color lists of each $v \in V(G)$, such that $|L(v)| \geq k$ to each vertex $v$, and the lists are the type $\left( \gamma,\mu \right)$. Application of the proof methods presented in the Section \ref{sec:methods}. The application will be in $k$-$(\gamma$,$\mu)$-coloring for bipartite and outerplanar graphs. 

\begin{teo}
	Every outerplanar graph is $3$-$(\gamma$,$\mu)$-choosable.
\end{teo}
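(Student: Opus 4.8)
The plan is to apply the degeneracy method recalled in Section~\ref{sec:methods}. The structural input I would use is that every outerplanar graph is $2$-degenerate, that is, every outerplanar graph on at least two vertices has a vertex of degree at most $2$; since every subgraph of an outerplanar graph is again outerplanar, this lets one peel such vertices off one at a time to obtain an ordering $v_1,v_2,\dots,v_n$ of $V(G)$ in which each $v_i$ has at most two neighbours among $v_1,\dots,v_{i-1}$.

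Given this ordering, I would argue as follows. Let $L$ be an arbitrary $(\gamma,\mu)$-list assignment with $|L(v)|\ge 3$ for every $v\in V(G)$, so $L(v)=\{\gamma(v),\gamma(v)+1,\dots,\mu(v)\}$ with $\mu(v)-\gamma(v)\ge 2$. Colour the vertices greedily in the order $v_1,\dots,v_n$: when $v_i$ is processed, at most two of its neighbours are already coloured, hence at most two colours of $L(v_i)$ are forbidden, and because $|L(v_i)|\ge 3$ there is a colour $f(v_i)\in L(v_i)$ distinct from the colours of all coloured neighbours. The colouring $f$ so produced is proper, and since every chosen colour lies in the interval $L(v)$ we have $\gamma(v)\le f(v)\le\mu(v)$ for all $v$; thus $G$ is $(\gamma,\mu)$-colourable for this $L$. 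As $L$ was arbitrary, $G$ is $3$-$(\gamma,\mu)$-choosable.

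The argument has no genuinely hard step; the only point needing care is the rigorous justification that outerplanar graphs are $2$-degenerate. I would deduce this from the structure of maximal outerplanar graphs: a $2$-connected outerplanar graph embeds in a triangulation of a polygon, which has at least two vertices of degree $2$, and the general case follows from the block decomposition, since a leaf block that is not $2$-connected is a $K_1$ or a $K_2$ and already contains a vertex of degree at most $1$. It is also worth observing that the interval structure of the lists is used only in the final line of the argument above, so the same proof yields the stronger statement that every outerplanar graph is $3$-choosable, with the present theorem as an immediate special case; moreover the bound $3$ cannot be lowered, since the outerplanar graph $K_3$ fails to be $2$-$(\gamma,\mu)$-choosable, as witnessed by assigning $L(v)=\{1,2\}$ to every vertex.
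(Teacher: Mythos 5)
Your proof is correct and follows essentially the same route as the paper's: both rest on the fact that every (subgraph of an) outerplanar graph has a vertex of degree at most $2$ --- the paper phrases this as an induction that deletes such a vertex and colours it last, you as a greedy colouring along a $2$-degeneracy ordering --- and in both cases a list of size $3$ always leaves a free colour. Your added observations, that the interval structure of the lists is never actually used (so the statement is a special case of $3$-choosability of outerplanar graphs) and that $K_3$ with every list equal to $\{1,2\}$ shows the bound $3$ cannot be lowered, are correct and go slightly beyond what the paper records.
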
\label{teo:outerplanar}

\vspace{-.4cm}
\begin{proof}
	Let $ G = K_1 $. This graph has no edges and is trivially outerplanar and $3$-$(\gamma$,$\mu)$-choosable. Assuming that every outerplanar graph with n vertices is $3$-$(\gamma$,$\mu)$-choosable, it will be shown that the coloring list is also valid for an outerplanar graph $ G_{n+1} $ with $ n+1 $ vertices and with a distribution of color lists in its vertices of type $(\gamma$,$\mu)$ of size $ 3 $. There exists a vertex $ v\in G_{n+1}$ with degree up to maximum $ 2 $. Removing this vertex, we have a subgraph $ G_n $ with $ n $ vertices \cite{book-chartrand}. Since $ G_{n+1} $ can be drawn with its vertices in the form of a cycle with chord without crosses, the subgraph can also be drawn in this way. This implies that $ G_n $ has a color list with lists of type $(\gamma$,$\mu)$ of size $ 3 $. Color the vertices of $ G_{n+1} $ with $ 3 $ or more colors. Since the vertices adjacent to vertex $ v $ of degree at maximum $ 2 $ are already colored, consider two cases: Case one: The vertex $ v $ has degree one. In this case, the vertex $ v $ receives a color from the list of colors and the adjacent vertex receives another color. Since the list color has size three, one color will remain. Case two: The vertex $ v $ has degree two. The vertex $ v $ will receive a third color from its list of that is different from its adjacent ones. And so the demonstration is valid.
\end{proof}

For the next theorem, Hall's method will be applied. To show that a complete bipartite graph is $2$-$(\gamma$,$\mu)$-choosable, let's first show two cases: when the subgraph of a bipartite graph has equal lists color among its partitions and when all the lists color are different.

\begin{lemma}
	Consider the complete bipartite graph $K_{n,n}=G[V,W]$ for $n\geq 1$ with partitions  $V=\{v_1,\ldots ,v_n\}$,$W=\{w_1,\ldots ,w_n\}$ and a distribution lists $\left( \gamma,\mu \right)$ of size $ 2 $, $L(v)$ for all $v\in K_{n,n}$ such that $L(v_i)\neq L(v_j)$ if $i\neq j$, is $L(v_i)= L(w_i)$ for all $i=1,\ldots ,n$. Then with these lists, $K_{n,n}$ is $2$-$\left(\gamma,\mu \right)$-list-colorable.
\end{lemma}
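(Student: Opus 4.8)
The plan is to first unwind what a $(\gamma,\mu)$-list of size $2$ is: for each vertex $v$ it must be the integer interval $L(v)=\{\gamma(v),\gamma(v)+1\}$. Writing $a_i:=\gamma(v_i)$, the hypotheses say $L(v_i)=L(w_i)=\{a_i,a_i+1\}$, and that $L(v_i)\neq L(v_j)$ for $i\neq j$ means the integers $a_1,\dots,a_n$ are pairwise distinct. Since $K_{n,n}$ is complete bipartite, a proper list colouring $f$ exists precisely when the colour set used on $V$ is disjoint from the colour set used on $W$; in particular, because $v_i$ is adjacent to $w_i$ and both have the $2$-element list $\{a_i,a_i+1\}$, one of $v_i,w_i$ must receive $a_i$ and the other $a_i+1$. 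I encode this unavoidable choice by an \emph{orientation} $\varepsilon_i\in\{0,1\}$: $\varepsilon_i=0$ meaning $f(v_i)=a_i$, $f(w_i)=a_i+1$, and $\varepsilon_i=1$ meaning the reverse.

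Next I would determine which of the cross constraints $f(v_i)\neq f(w_j)$ with $i\neq j$ can actually be violated. If $|a_i-a_j|\geq 2$ the two lists are disjoint and there is nothing to check; if $|a_i-a_j|=1$, a short case analysis on $\varepsilon_i,\varepsilon_j$ shows that $f(v_i)=f(w_j)$ (and likewise $f(v_j)=f(w_i)$) happens if and only if $\varepsilon_i=\varepsilon_j$. Hence it suffices to choose the orientations so that $\varepsilon_i\neq\varepsilon_j$ whenever $|a_i-a_j|=1$; equivalently, to properly $2$-colour the auxiliary graph $H$ on $\{1,\dots,n\}$ in which $i\sim j$ iff $|a_i-a_j|=1$. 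Concretely this is done by sorting the vertices of $V$ so that $a_{\pi(1)}<a_{\pi(2)}<\cdots<a_{\pi(n)}$ and alternating, $\varepsilon_{\pi(k)}\equiv k\pmod 2$: whenever $a_i,a_j$ are consecutive integers they are neighbours in this sorted order, hence get opposite parities. One then reads $f$ off from the orientations and checks all adjacencies of $K_{n,n}$ are respected.

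The one point that needs a genuine argument — and where I would invoke Hall's method / systems of distinct representatives as announced — is that this auxiliary constraint $H$ is always $2$-colourable. Since the $a_i$ are distinct integers, each index $i$ has at most two $H$-neighbours (the indices carrying the values $a_i-1$ and $a_i+1$, when present), so $H$ has maximum degree $2$; moreover $H$ has no cycle, because in any cycle the index with the smallest $a$-value would have both of its cycle-neighbours equal to the unique index of value $a_i+1$, which is impossible. Thus $H$ is a disjoint union of paths, hence bipartite, and a proper $2$-colouring exists; alternatively, one phrases the orientation choice directly as an SDR problem and verifies Hall's condition, the degree-$2$ bound being exactly what makes it go through. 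This argument works for every $n\geq 1$ and in particular settles the base case $K_{1,1}$, completing the lemma.
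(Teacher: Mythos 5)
Your proof is correct, and it arrives at essentially the same colouring as the paper but by a different route. The paper argues by induction on $n$: it sorts the pairs so that $\gamma_1<\gamma_2<\cdots<\gamma_{n+1}$, colours the first $n$ pairs by the induction hypothesis, and then treats only the ``worst case'' $\gamma_{n+1}=\gamma_n+1$, where (after a symmetry swap of the two sides) it gives $v_{n+1}$ the colour $\gamma_{n+1}+1$ and $w_{n+1}$ the colour $\gamma_{n+1}$ --- implicitly producing exactly the alternating orientation that you construct globally. Your version replaces the induction by an explicit rule (sort the $a_i$ and alternate which side of each pair takes the lower colour) and, more importantly, isolates precisely which constraints of $K_{n,n}$ can ever fail: only cross-edges between pairs with $|a_i-a_j|=1$, and only when the two pairs have the same orientation. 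Reducing the problem to $2$-colouring the auxiliary graph $H$ (a disjoint union of paths, since the $a_i$ are distinct integers) makes the ``easy'' cases --- disjoint lists, and the non-worst case $\gamma_{n+1}>\gamma_n+1$ that the paper passes over in silence --- explicitly harmless, and it avoids the slightly delicate point in the paper's induction of assuming control over which of $v_n,w_n$ received $\gamma_n$. One small remark: your closing appeal to Hall's theorem is unnecessary here (the union-of-paths observation already suffices); the paper reserves Hall's method for the companion lemma in which all lists across the two partitions are distinct.
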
\label{lemmaOne}

\vspace{-.4cm}
\begin{proof}
	Consider a $K_{n,n}$ graph where the color lists of type $\left(\gamma,\mu \right)$ that occur in one partition also occur in the other partition, see Figure \ref{Figura:GrafoBipartidoLema1}. For $ n = 1 $, the graph $ K_{1,1} $ has two vertices $ v_1 $ and $ w_1 $ and $ L(v_1)=L(w_1)=\{1,2\} $. Suppose now that the theorem holds for $ K_{n,n} $, where $ n\ge1 $ and then holds for $ K_{n+1,n+1} $. Without loss of generality, assume that the lists $ L(vi)=\gamma_i; \gamma_{i+1} $ are distributed at the vertices such that $ \gamma_1 <\gamma_2 <\ldots <\gamma_{n+1} $. 
	
	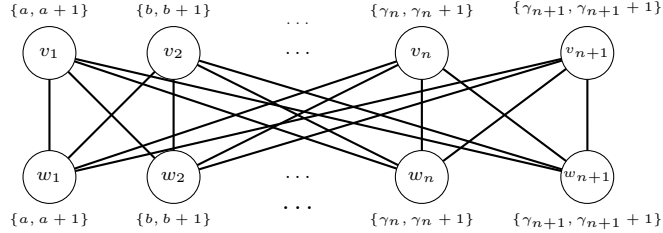
\begin{figure}[ht]
		\centering
		\begin{tikzpicture}[scale=1.1]
		\SetGraphUnit{2}
		\GraphInit[vstyle=Normal]
		\SetUpVertex[LabelOut]
		\tikzset{VertexStyle/.style = {shape = circle, minimum size = 20pt, draw, color=black}}
		
		\Vertex[L=\hbox{\tiny$\{a,a+1\}$},Lpos=90,x=0.0cm,y=1.5cm]{v0}
		\Vertex[L=\hbox{\tiny$\{a,a+1\}$},Lpos=-90,x=0.0cm,y=0.0cm]{v1}
		\Vertex[L=\hbox{\tiny$\{b,b+1\}$},Lpos=-90,x=1.5cm,y=0.0cm]{v2}
		\Vertex[L=\hbox{\tiny$\{b,b+1\}$},Lpos=90,x=1.5cm,y=1.5cm]{v3}
		
		\Vertex[L=\hbox{\tiny$\{\gamma_n,\gamma_n+1\}$},Lpos=90,x=4.5cm,y=1.5cm]{v6}
		\Vertex[L=\hbox{\tiny$\{\gamma_n,\gamma_n+1\}$},Lpos=-90,x=4.5cm,y=0.0cm]{v7}
		\Vertex[L=\hbox{\tiny$\{\gamma_{n+1},\gamma_{n+1}+1\}$},Lpos=-90,x=6.5cm,y=0.0cm]{v8}
		\Vertex[L=\hbox{\tiny$\{\gamma_{n+1},\gamma_{n+1}+1\}$},Lpos=90,x=6.5cm,y=1.5cm]{v9}
		
		\tikzset{VertexStyle/.style = {shape = circle, minimum size = 15pt, draw, color=white}}
		\Vertex[L=\hbox{$\ldots$},Lpos=-90,x=3.0cm,y=0.0cm]{v5}
		\Vertex[L=\hbox{\tiny$\ldots$},Lpos=90,x=3.0cm,y=1.5cm]{v4}
		
		\Edge[](v0)(v1)
		\Edge[](v0)(v2)
		\Edge[](v0)(v7)
		\Edge[](v0)(v8)
		
		\Edge[](v3)(v1)
		\Edge[](v3)(v2)
		\Edge[](v3)(v7)
		\Edge[](v3)(v8)
		\Edge[](v9)(v1)
		\Edge[](v9)(v2)
		
		\Edge[](v6)(v7)
		\Edge[](v6)(v8)
		\Edge[](v7)(v9)
		\Edge[](v8)(v9)
		\Edge[](v6)(v1)
		\Edge[](v6)(v2)
		
		\node at (v0.center) {\scriptsize$v_1$};
		\node at (v3.center) {\scriptsize$v_2$};
		\node at (v1.center) {\scriptsize$w_1$};
		\node at (v2.center) {\scriptsize$w_2$};
		\node at (v4.center) {\scriptsize$\ldots$};
		\node at (v5.center) {\scriptsize$\ldots$};
		\node at (v6.center) {\scriptsize$v_n$};
		\node at (v7.center) {\scriptsize$w_n$};
		\node at (v8.center) {\tiny$w_{n+1}$};
		\node at (v9.center) {\tiny$v_{n+1}$};
		\end{tikzpicture}
		\caption{Example of a complete bipartite graph with the color lists of partition $V=\{v_1,\ldots ,v_n\}$ equal to the partition $W=\{w_1,\ldots ,w_n\}$ of type $(\gamma, \mu)$ of size two.}
		\label{Figura:GrafoBipartidoLema1}
	\end{figure}
	
	Consider the complete subgraph of $ K_{n,n} $ obtained by the vertices $ v_1, v_2, \ldots,v_n $ and $ w_1, w_2,\ldots, w_n $. By induction hypothesis there is a list coloration for this $ K_{n,n} $. Coloring is missing vertices $ v_{n+1} $ and $ w_{n+1} $. In the worst case $ \gamma_{n+1} = \gamma_n+1 $ since $ \gamma_n<\gamma_{n+1} $ and the lists are in sequence. In this case, if $ \gamma_{n+1}=\gamma_n+1 $, without loss of generality, assume that color chosen for the vertex $ v_n $ was $ \gamma_n $ and for the vertex $ w_n $ was $ \gamma_n+1 $. Then the vertex $ v_{n+1} $ will be colored with the color $ \gamma_{n+1}+1 $ and the vertex $ w_{n+1} $ with the color $ \gamma_{n+1} $.
\end{proof}

\begin{lemma}\label{lemmatwo}
	Consider the complete bipartite graph $K_{n,n}=G[V,W]$ for $n\geq 1$ with partitions  $V=\{v_1,\ldots ,v_n\}$,$W=\{w_1,\ldots ,w_n\}$ and a distribution lists $\left( \gamma,\mu \right)$ of size $ 2 $, $L(v)$ for all $v\in K_{n,n}$ such that $L(v_i)\neq L(w_j)$ for all $i,j=1,\ldots,n$. Then with these lists, $K_{n,n}$ is $2$-$\left( \gamma,\mu \right)$-list-colorable.
\end{lemma}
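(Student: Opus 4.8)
The plan is to avoid induction altogether and simply exhibit one explicit coloring that is valid for \emph{every} admissible list assignment. The key observation is that a $(\gamma,\mu)$-list of size $2$ on a vertex $x$ is forced to be the pair $\{\gamma(x),\gamma(x)+1\}$ (so $\mu(x)=\gamma(x)+1$); consequently two such lists are equal if and only if their smallest elements are equal. Guided by this, I would define the candidate coloring $f$ on $K_{n,n}$ by $f(x)=\gamma(x)$ for every vertex $x$, i.e. each vertex takes the lower endpoint of its own list. This is automatically a legitimate $(\gamma,\mu)$-choice, since $\gamma(x)\le\gamma(x)\le\mu(x)$ and hence $\gamma(x)\in L(x)$.

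The only real step is then to verify properness, and it is short. Since $K_{n,n}$ is bipartite with parts $V$ and $W$, there are no edges inside $V$ or inside $W$, so every edge joins some $v_i$ to some $w_j$; thus it suffices to show $f(v_i)\neq f(w_j)$, i.e. $\gamma(v_i)\neq\gamma(w_j)$, for all $i,j$. Suppose to the contrary that $\gamma(v_i)=\gamma(w_j)$ for some pair $i,j$. Then $L(v_i)$ and $L(w_j)$ are two $(\gamma,\mu)$-lists of size $2$ with the same minimum element, hence $L(v_i)=L(w_j)$, contradicting the hypothesis $L(v_i)\neq L(w_j)$. Therefore $f$ is a proper $(\gamma,\mu)$-list-coloring, and $K_{n,n}$ is $2$-$(\gamma,\mu)$-list-colorable under these lists; note that the argument works uniformly for every $n\ge1$, with no base case needed.

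Honestly there is no serious obstacle here; the content lies entirely in the hypotheses. For arbitrary size-$2$ lists $K_{n,n}$ need not be $2$-choosable (already $K_{3,3}$ fails), so both the $(\gamma,\mu)$ structure and the cross-partition distinctness of the lists are genuinely used, and ``size~$2$'' must be read as \emph{exactly} $2$: for longer lists, equality of the minima would no longer force equality of the lists, and the lower-endpoint coloring could break. If one preferred to stay close to the style of Lemma~\ref{lemmaOne}, the same conclusion could instead be obtained by induction on $n$ (delete one vertex from each part, color the remaining $K_{n,n}$ by the inductive hypothesis, then extend), but this is strictly more work and obscures why the hypothesis is precisely what is needed.
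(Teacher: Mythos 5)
Your proof is correct, but it takes a different route from the paper's. The paper proves this lemma with Hall's theorem, in keeping with the survey of proof methods in Section~\ref{sec:methods}: it collects the distinct lists $S_1,\dots,S_r$ occurring in $V$ and $S_{r+1},\dots,S_{r+t}$ occurring in $W$, observes that distinct $(\gamma,\mu)$-lists of size $2$ pairwise intersect in at most one element (so the family satisfies Hall's condition and admits a system of distinct representatives $c_i\in S_i$), and colors every vertex with the representative of its list. Your argument replaces all of this with the single observation that a size-$2$ $(\gamma,\mu)$-list is $\{\gamma(x),\gamma(x)+1\}$ and hence determined by its minimum, so the assignment $f(x)=\gamma(x)$ is proper precisely because the cross-partition lists are distinct. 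Yours is more elementary and makes it transparent exactly where each hypothesis (consecutiveness, exact size $2$, cross-partition distinctness) is used; it also dispenses with the induction scaffolding that the paper carries over from Lemma~1. What the Hall argument buys instead is a strictly stronger conclusion --- all distinct lists receive pairwise distinct colors, not merely distinct across the bipartition --- and it is the version of the argument the paper wants to showcase as an application of ``Hall's method.'' Your closing remarks (that $K_{3,3}$ is not $2$-choosable for arbitrary lists, and that the lower-endpoint trick fails for lists of size greater than $2$) are accurate and worth keeping as motivation, though they are not needed for the proof itself.
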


\begin{proof}
Consider in partition $ V $ only the different lists $S_1,S_2,\ldots ,S_r$ and in partition $ W $ only the different lists $S_{r+1},S_{r+2},\ldots ,S_{r+t}$. Thus, we have to $ S_i\neq S_j $, for $ i\neq j $, and since they are lists $\left( \gamma,\mu \right)$ of size $ 2 $, the intersection of any two of these lists have at most one element.

\begin{figure}[ht]
	\centering
	\begin{tikzpicture}[scale=1]
	\SetGraphUnit{2}
	\GraphInit[vstyle=Normal]
	\SetUpVertex[LabelOut]
	\tikzset{VertexStyle/.style = {shape = circle, minimum size = 20pt, draw, color=black}}
	
	\Vertex[L=\hbox{\tiny$\{a,a+1\}$},Lpos=90,x=0.0cm,y=1.5cm]{v0}
	\Vertex[L=\hbox{\tiny$\{c,c+1\}$},Lpos=-90,x=0.0cm,y=0.0cm]{v1}
	\Vertex[L=\hbox{\tiny$\{d,d+1\}$},Lpos=-90,x=1.5cm,y=0.0cm]{v2}
	\Vertex[L=\hbox{\tiny$\{b,b+1\}$},Lpos=90,x=1.5cm,y=1.5cm]{v3}
	
	\Vertex[L=\hbox{\tiny$\{\gamma_n,\gamma_n+1\}$},Lpos=90,x=4.5cm,y=1.5cm]{v6}
	\Vertex[L=\hbox{\tiny$\{\gamma_n,\gamma_n+1\}$},Lpos=-90,x=4.5cm,y=0.0cm]{v7}
	\Vertex[L=\hbox{\tiny$\{\gamma_{n+1},\gamma_{n+1}+1\}$},Lpos=-90,x=6.5cm,y=0.0cm]{v8}
	\Vertex[L=\hbox{\tiny$\{\gamma_{n+1},\gamma_{n+1}+1\}$},Lpos=90,x=6.5cm,y=1.5cm]{v9}
	
	\tikzset{VertexStyle/.style = {shape = circle, minimum size = 15pt, draw, color=white}}
	\Vertex[L=\hbox{$\ldots$},Lpos=-90,x=3.0cm,y=0.0cm]{v5}
	\Vertex[L=\hbox{\tiny$\ldots$},Lpos=90,x=3.0cm,y=1.5cm]{v4}
	
	\Edge[](v0)(v1)
	\Edge[](v0)(v2)
	\Edge[](v0)(v7)
	\Edge[](v0)(v8)
	
	\Edge[](v3)(v1)
	\Edge[](v3)(v2)
	\Edge[](v3)(v7)
	\Edge[](v3)(v8)
	\Edge[](v9)(v1)
	\Edge[](v9)(v2)
	
	\Edge[](v6)(v7)
	\Edge[](v6)(v8)
	\Edge[](v7)(v9)
	\Edge[](v8)(v9)
	\Edge[](v6)(v1)
	\Edge[](v6)(v2)
	
	\node at (v0.center) {\scriptsize$v_1$};
	\node at (v3.center) {\scriptsize$v_2$};
	\node at (v1.center) {\scriptsize$w_1$};
	\node at (v2.center) {\scriptsize$w_2$};
	\node at (v4.center) {\scriptsize$\ldots$};
	\node at (v5.center) {\scriptsize$\ldots$};
	\node at (v6.center) {\scriptsize$v_n$};
	\node at (v7.center) {\scriptsize$w_n$};
	\node at (v8.center) {\tiny$w_{n+1}$};
	\node at (v9.center) {\tiny$v_{n+1}$};
	\end{tikzpicture}
	\caption{Example of a complete bipartite graph with all lists of different colors and of type $(\gamma, \mu)$ of size two.}
	\label{Figura:GrafoBipartidoLema2}
\end{figure}
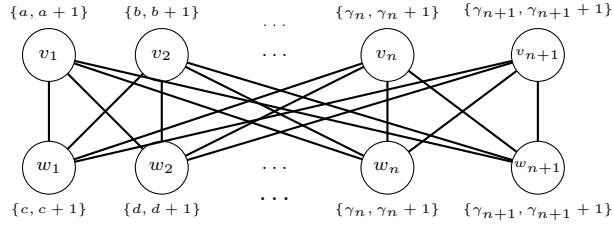

Thus, these $ S_i $ sets fulfill Hall's theorem and thus we obtain elements different $ c_i \in S_i $, with $ c_i\neq c_j $. Then we color a vertex $ v \in K{n,n} $ with the color $ c_i $ if $ L(v)=Si $, and thus we have a color list of $K{n,n}$.
\end{proof}

\begin{teo}\label{teo:bipartite}
	The $ K_{n,n} $ is $2$-$(\gamma$,$\mu)$-choosable.
\end{teo}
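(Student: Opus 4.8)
The plan is to exploit the special structure of list coloring on a complete bipartite graph. A choice $f(v)\in L(v)$ is a proper coloring of $K_{n,n}=G[V,W]$ if and only if no color is used on both sides, i.e.\ $\{f(v):v\in V\}\cap\{f(w):w\in W\}=\varnothing$, since vertices inside one part are pairwise non-adjacent. So it suffices to produce a partition $\sigma:\mathbb{N}\to\{V,W\}$ of the color universe into a ``$V$-side'' and a ``$W$-side'' with $L(v)\cap\sigma^{-1}(V)\neq\varnothing$ for every $v\in V$ and $L(w)\cap\sigma^{-1}(W)\neq\varnothing$ for every $w\in W$; colouring each vertex with a color of its own side taken from its list then finishes the argument. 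This reformulation contains Lemma~\ref{lemmaOne} (the case in which every list occurs on both sides) and Lemma~\ref{lemmatwo} (the case in which no list occurs on both sides) as the two extreme situations, and the work is to cover an arbitrary mixture of the two.

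First I would clean up the instance: two vertices of the same part carrying the same list may be identified, because they are non-adjacent and can share a color, so we may assume that all lists inside $V$ are distinct and all lists inside $W$ are distinct. Since the lists are $(\gamma,\mu)$-lists of size $2$, each is an interval $\{a,a+1\}$, so the membership condition on $\sigma$ can fail at a vertex with list $\{a,a+1\}$ only if both $a$ and $a+1$ lie on the opposite side. Writing $x_c=[\sigma(c)=V]$, the requirement becomes a $2$-SAT instance all of whose clauses involve only two consecutive variables: a list $\{a,a+1\}$ of a $V$-vertex gives the clause $(x_a\vee x_{a+1})$, a list of a $W$-vertex gives $(\neg x_a\vee\neg x_{a+1})$, and a list shared by the two parts gives both clauses, i.e.\ forces $x_a\neq x_{a+1}$. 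Hence the constraint structure is a disjoint union of paths laid out along the integer line.

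The remaining step is to satisfy this instance greedily. Scanning the colors $1,2,3,\dots$ in increasing order, I would assign $\sigma$ arbitrarily on every color lying in no list, and along each maximal run of colors linked by clauses fix the first color freely and then, with $\sigma(c)$ already chosen, pick $\sigma(c+1)$ so as to satisfy the clause(s) between colors $c$ and $c+1$; a short case check confirms that for either value of $x_c$ there is a value of $x_{c+1}$ satisfying $(x_c\vee x_{c+1})$, or $(\neg x_c\vee\neg x_{c+1})$, or both. This yields the desired $\sigma$, hence a valid $(\gamma,\mu)$-list-colouring for every size-$2$ $(\gamma,\mu)$-assignment, so $K_{n,n}$ is $2$-$(\gamma,\mu)$-choosable. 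The delicate point — and where the interval shape of $(\gamma,\mu)$-lists is genuinely used — is the reduction to consecutive-variable clauses: this locality is exactly what makes the instance acyclic and always satisfiable, in contrast with ordinary $2$-choosability of $K_{n,n}$, which fails already for moderately large $n$. It also has to be checked that identifying repeated lists does no harm and that distinct linked runs are independent, but both are immediate.
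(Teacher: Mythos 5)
Your argument is correct, and it takes a genuinely different route from the paper. You reduce properness on $K_{n,n}$ to the disjointness of the color sets used by the two parts, and then to finding a two-sided partition $\sigma$ of the color universe hitting every list on the appropriate side; because size-$2$ $(\gamma,\mu)$-lists are intervals $\{a,a+1\}$, the resulting constraints form a $2$-SAT instance whose clauses join only consecutive variables, and your case check for the left-to-right greedy is complete ($(x_a\vee x_{a+1})$ alone is met by putting $a+1$ on the $V$-side regardless of $x_a$, $(\neg x_a\vee\neg x_{a+1})$ alone by the opposite choice, and both together by $x_{a+1}\neq x_a$), so the scan never gets stuck. The paper instead proves two extreme cases as separate lemmas --- Lemma~\ref{lemmaOne}, where every list of one part also occurs on the other, by induction on $n$ after sorting the lists by their lower endpoints, and Lemma~\ref{lemmatwo}, where no list is shared, by Hall's theorem on systems of distinct representatives --- and then combines them in the proof of Theorem~\ref{teo:bipartite} by extracting the shared lists into a $K_{r,r}$. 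Your single uniform argument absorbs both lemmas as special cases and, notably, dispenses with the combination step, which is the least transparent part of the paper's proof; it is also constructive, giving a linear-time coloring procedure. What the paper's route buys is the explicit demonstration of Hall's method, which is one of the expository aims of its Section~2.2. One cosmetic point: the definition of $k$-$(\gamma,\mu)$-choosability allows $|L(v)|\geq 2$, so you should note (as is implicit in the paper too) that any interval list of size at least $2$ may be truncated to an interval sublist of size exactly $2$ before running your argument.
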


\begin{proof}
	Let $G=K_{n,n}$ a complete bipartite graph with partitions $V=\{v_1,v_2,\ldots,\\v_n\}$ e $W=\{w_1,w_2,\ldots ,w_n\}$, for $n\geq 1$. Consider a distribution of lists of colors of type $\left( \gamma,\mu \right)$ if size $ 2 $, $L(v)$ for every vertex $v$ de $G$. Let us choose the color lists that appear while the vertices of the two partitions $ V $ e $ W $. Clearly some of these lists may appear in more than one vertex in each partition, but will take just a list representative in her case appear more than once in the same partition. Thus, consider the sets $S_1,S_2,\ldots ,S_r$ $(r\leq n)$, which are the lists that occur while vertices of $ V $ and $ W $, and how we choose only one representative in the event of repetition in the partition, we have $S_i\neq S_j$, for $i\neq j$. If $r=0$ ,the graph $ G $ satisfies the hypotheses of Lemma 1 and so we have a list coloring of $ G $. If $r\geq 1$ can form a complete subgraph $K_{r,r}$ with these color lists  $S_1,S_2,\ldots ,S_r$ and a vertex for each representative. This $K_{r,r}$ graph $ G $ satisfies the hypotheses of Lemma \ref{lemmatwo} and so we have a list coloring of $K_{r,r}$.
\end{proof}

A stronger proof in $k$-$(\gamma$,$\mu)$-choosable will be presented, which involves the more general coloring of graphs.

\begin{teo}
	If $G$ have a $k$-coloring, so $G$ is $k$-$(\gamma,\mu)$-choosable.
\end{teo}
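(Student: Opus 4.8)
The plan is to exploit the one feature that separates $(\gamma,\mu)$-lists from arbitrary lists: a $(\gamma,\mu)$-list is a block of consecutive integers, so a list of size at least $k$ hits every residue class modulo $k$. Starting from a proper $k$-coloring of $G$, I will recolor each vertex \emph{inside its own list} while keeping the residue of its color modulo $k$ unchanged; the resulting coloring is automatically proper because the $k$ original colors are pairwise non-congruent modulo $k$.

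In detail, first I would fix a proper $k$-coloring $c\colon V(G)\to\{1,2,\dots,k\}$, which exists by hypothesis. Let $L$ be an arbitrary $(\gamma,\mu)$-list assignment with $|L(v)|\ge k$ for every $v$, so $L(v)=\{\gamma(v),\gamma(v)+1,\dots,\mu(v)\}$ with $\mu(v)-\gamma(v)+1\ge k$. The $k$ consecutive integers $\gamma(v),\gamma(v)+1,\dots,\gamma(v)+k-1$ all lie in $L(v)$ and their residues modulo $k$ are exactly $0,1,\dots,k-1$ in some order; hence there is a unique one among them, call it $f(v)$, with $f(v)\equiv c(v)\pmod{k}$. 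This defines the candidate coloring $f$, and by construction $f(v)\in L(v)$ for all $v$.

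Next I would verify that $f$ is proper. If $uv\in E(G)$, then $c(u)\ne c(v)$; since $c(u),c(v)\in\{1,\dots,k\}$, distinct values there are non-congruent modulo $k$, so $c(u)\not\equiv c(v)\pmod{k}$, and therefore $f(u)\equiv c(u)\not\equiv c(v)\equiv f(v)\pmod{k}$, giving $f(u)\ne f(v)$. Thus $f$ is a $(\gamma,\mu)$-coloring of $G$ compatible with the lists $L$. Since $L$ was an arbitrary $(\gamma,\mu)$-list assignment of size at least $k$, this shows $G$ is $k$-$(\gamma,\mu)$-choosable.

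I do not expect a genuine technical obstacle here; essentially everything rests on the single observation that makes the reduction work, namely that \emph{consecutiveness} of the lists forces each size-$\ge k$ list to contain one color of every class modulo $k$. This is precisely why $k$-$(\gamma,\mu)$-choosability can be much weaker than ordinary $k$-choosability (for instance it immediately yields $2$-$(\gamma,\mu)$-choosability of $K_{n,n}$, matching Theorem \ref{teo:bipartite}, even though the choice number of $K_{n,n}$ is unbounded), and it subsumes the outerplanar case as well, since outerplanar graphs are $3$-colorable.
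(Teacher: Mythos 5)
Your proposal is correct and is essentially the same argument as the paper's: both partition the integers into residue classes modulo $k$, observe that any $k$ consecutive integers (hence any $(\gamma,\mu)$-list of size at least $k$) contain exactly one element of each class, and then select from each vertex's list the element congruent to its color in the given proper $k$-coloring, properness being preserved because distinct colors in $\{1,\dots,k\}$ are non-congruent modulo $k$. No substantive difference from the paper's proof.
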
\label{teo:general}

\begin{proof}
	When we do the division of any integer $m$ by $k$, we get the rest of the division $r$, such that $r\in \{0,1,2,\ldots,k-1\}$. Besides this division partitions the integers into $ k $ distinct sets.	
	Consider $A_0=\{n\in \mathbb{Z}|$ the remainder of dividing $ n $ by $ k $ is $0$ $\}$, $A_1=\{n\in \mathbb{Z}|$the rest of the division of $ n $ by $ k $ is $1$ $\}$, $A_2=\{n\in \mathbb{Z}|$the rest of the division of $ n $ by $ k $ is $2$ $\}\ldots$, $A_{k-1}=\{n\in \mathbb{Z}|$the rest of the division of $ n $ by $ k $ is $k-1$ $\}$. These sets two by two dijuntos, $A_i\cap A_j=\emptyset$, if $i\neq j$. Another important fact is that any given sequence of  $k$ consecutive integers $x, x +1, x +2, \ldots, x + k-1 $, there exists exactly one of these elements contained in each of the sets $A_i$. That is, one of these elements leaves a remainder $ i $ in the division by $ k $ for $ i = 0,1,2, \ldots k-1 $.
	
\vspace{-.3cm}	
	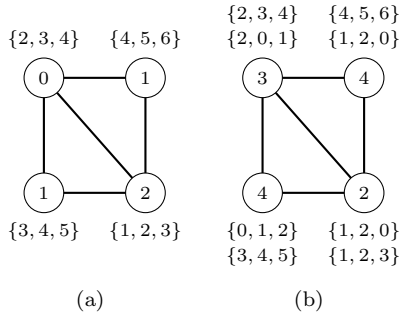
\begin{figure}[h]
		\centering
		\subfigure[]
		{
			\begin{tikzpicture}[scale=.9]
			\SetGraphUnit{2}
			\GraphInit[vstyle=Normal]
			\SetUpVertex[LabelOut]
			\tikzset{VertexStyle/.style = {shape = circle, minimum size = 15pt, draw, color=black}}
			
			\Vertex[L=\hbox{\scriptsize$\{2,3,4\}$},Lpos=90,x=0.0cm,y=1.7cm]{v0} 
			\Vertex[L=\hbox{\scriptsize$\{3,4,5\}$},Lpos=-90,x=0.0cm,y=0.0cm]{v1}
			
			\Vertex[L=\hbox{\scriptsize$\{1,2,3\}$},Lpos=-90,x=1.5cm,y=0.0cm]{v2}
			\Vertex[L=\hbox{\scriptsize$\{4,5,6\}$},Lpos=90,x=1.5cm,y=1.7cm]{v3} 

			\Edge[](v0)(v1) 	
			\Edge[](v0)(v2)		
			\Edge[](v0)(v3)		
			\Edge[](v2)(v3)		
			\Edge[](v1)(v2)

			\node at (v0.center) {\scriptsize$0$};
			\node at (v3.center) {\scriptsize$1$};
			\node at (v1.center) {\scriptsize$1$};
			\node at (v2.center) {\scriptsize$2$};
			
			\node (v0) [above of=v0, left=0cm, above=-0.4cm] {\scriptsize$\textcolor{white}{\{2}\textcolor{white}{3}\textcolor{white}{4\}}$};
			\node (v3) [above of=v3, left=0cm, above=-0.4cm] {\scriptsize$\textcolor{white}{\{\}4}\textcolor{white}{5}\textcolor{white}{6\}}$};
			\node (v1) [below of=v1, left=0cm, below=-0.4cm] {\scriptsize$\textcolor{white}{\{\}3}\textcolor{white}{4}\textcolor{white}{5\}}$};
			\node (v2) [below of=v2, left=0cm, below=-0.4cm] {\scriptsize$\textcolor{white}{\{\}1}\textcolor{white}{2}\textcolor{white}{3\}}$};
			\end{tikzpicture}
			\label{figure1}
		}
		\subfigure[]
		{
			\begin{tikzpicture}[scale=.9]
			\SetGraphUnit{2}
			\GraphInit[vstyle=Normal]
			\SetUpVertex[LabelOut]
			\tikzset{VertexStyle/.style = {shape = circle, minimum size = 15pt, draw, color=black}}
			
			\Vertex[L=\hbox{\scriptsize$\{2,0,1\}$},Lpos=90,x=0.0cm,y=1.7cm]{v0} 
			\Vertex[L=\hbox{\scriptsize$\{0,1,2\}$},Lpos=-90,x=0.0cm,y=0.0cm]{v1}
			
			\Vertex[L=\hbox{\scriptsize$\{1,2,0\}$},Lpos=-90,x=1.5cm,y=0.0cm]{v2}
			\Vertex[L=\hbox{\scriptsize$\{1,2,0\}$},Lpos=90,x=1.5cm,y=1.7cm]{v3} 

			\Edge[](v0)(v1) 	
			\Edge[](v0)(v2)		
			\Edge[](v0)(v3)		
			\Edge[](v2)(v3)		
			\Edge[](v1)(v2)		
			
			\node at (v0.center) {\scriptsize$3$};
			\node at (v3.center) {\scriptsize$4$};
			\node at (v1.center) {\scriptsize$4$};
			\node at (v2.center) {\scriptsize$2$};
			
			\node (v0) [above of=v0, left=0cm, above=-0.4cm] {\scriptsize$\{\textcolor{black}{2},\textcolor{black}{3},\textcolor{black}{4}\}$};
			\node (v3) [above of=v3, left=0cm, above=-0.4cm] {\scriptsize$\{\textcolor{black}{4},\textcolor{black}{5},\textcolor{black}{6}\}$};
			\node (v1) [below of=v1, left=0cm, below=-0.4cm] {\scriptsize$\{\textcolor{black}{3},\textcolor{black}{4},\textcolor{black}{5}\}$};
			\node (v2) [below of=v2, left=0cm, below=-0.4cm] {\scriptsize$\{\textcolor{black}{1},\textcolor{black}{2},\textcolor{black}{3}\}$};
			
			\end{tikzpicture}
			\label{figure2}
		}
		\caption{The Figure \ref{figure1} presents a $ 3 $-colored graph with colors $ 0,1 $ and $ 2 $ and the Figure \ref{figure2} presents a choice of colors for its vertices in an appropriate way, according to the Theorem \ref{teo:general}, therefore, the vertex with color $ 0 $ receives the color $ 3 $ of the list $ \left(\gamma, \mu \right)$, the vertices with the color $ 1 $ receive the color $ 4 $ and the vertex with the color $ 2 $ also receives the color $ 2 $ of the list $ \left(\gamma, \mu \right)$.}
		\label{fig:exemplo_4cores}
	\end{figure}
	
	Now consider a given distribution of lists of size $ k $ of $\left(\gamma,\mu \right)$'s type to the vertices of $ V $. Thus, for each $ v_i \in V $, we have $ L (v_i) = \{\gamma_i, \gamma_i+1, \ldots, \gamma_i + k-1 \} $. Let $ v_i $ and $v_j $  adjacent vertices in $ G $, and $ c_i = c(v_i) \neq c(v_j) = c_j$, and therefore the sets $ A_{c_i} $ and $ A_{c_j}$  are disjoint. To do one $ \left (\gamma, \mu \right) $-list coloring of $ G $ in these adjacent vertices $ v_i $ and $v_j $  choose as color for the vertex $ v_i $ element of $ L(v_i) $ that belongs to the set $ A_{c_i} $ and vertex $v_j $ choose the element of $ L(v_j) $ that belong to the set $ A_{c_j} $ and this ensures that adjacent vertices are colored with distinct colors. Therefore $ G $ is $k$-$\left (\gamma, \mu \right) $-choosable.
\end{proof}

 The choosability in graphs is a $\Pi_2^P$-complete problem \cite{article-erdos}, however, we saw in the previous theorem that $k$-$(\gamma,\mu)$-choosability is an $NP$-problem due to its relation with the $k$-coloring.

\section{Conclusions}

A $ (\gamma,\mu) $-coloring is a graph coloring problem that has been studied frequently in recent years. It has a real-world application in an interesting and well-studied way: the channel assignment problems \cite{dias2014modelos}. The correlation between $(\gamma$,$\mu)$-coloring and choosability in graphs was very interesting because in this case, the choosability stopped being $\Pi_2^P$-complete to be $NP$-complete. The use of methods of proof used for choosability in some classes of graphs is interesting in the sense of presenting the diversity of forms of solutions to this problem. The goal is to continue working with $(\gamma, \mu)$-coloring on other forms of complexity. Table \ref{table:results} presents a comparison with some results in classical coloring of graphs, choosability in graphs and the results obtained in this work with $k$-$\left(\gamma,\mu \right)$-choosable (in asterisk), as a consequence of Theorem \ref{teo:general}.

\begin{table}[ht]
	\centering
	\label{table:results}
	\footnotesize{
		\begin{tabular}{|p{1.9cm}|p{1.5cm}|p{2.2cm}|P{2.6cm}|}
			\hline
			Graph Classes & \centering $\chi(G)$ & \centering$\chi_\ell(G)$                 & $\chi_{\ell(\gamma,\mu)}$           \\ \hline
			Trees         &  $ 2 $-colorable     & $ 2 $-choosable \cite{book-chartrand}    & $ 2 $-$(\gamma,\mu)$-choosable$ ^* $\\ 
			Even Cycles   &  $ 2 $-colorable     & $ 2 $-choosable \cite{article-erdos}     & $ 2 $-$(\gamma,\mu)$-choosable$ ^* $\\
			Bipartite     &  $ 2 $-colorable     & $ 3 $-choosable \cite{article-erdos}     & $ 2 $-$(\gamma,\mu)$-choosable$ ^* $\\  
			Odd Cycle     &  $ 3 $-colorable     & $ 3 $-choosable \cite{article-erdos}     & $ 3 $-$(\gamma,\mu)$-choosable$ ^* $\\ 
			Planar        &  $ 4 $-colorable     & $ 5 $-choosable \cite{thomassen1994every} & $ 4 $-$(\gamma,\mu)$-choosable$ ^* $\\ \hline	
	\end{tabular}}
	\caption{Some results in graph coloring, choosability and $ k $-$(\gamma,\mu)$-choosability obtained in this work.}
\end{table}

\vspace{-.5cm}
\section{Acknowledgment}

Thanks to CAPES (Coordena\c{c}\~ao de Aperfei\c{c}oamento de Pessoal de N\'ivel Superior), CNPq (Conselho Nacional de Desenvolvimento Cient\'ifico e Tecnol\'ogico) and FAPEAM (Funda\c{c}\~ao de Amparo a Pesquisa do Estado do Amazonas) for the funding of this work.




\begin{thebibliography}{00}

\bibitem{alon1992colorings}
Alon, Noga and Tarsi, Michael,
\emph{Colorings and Orientations of Graphs}, Combinatoria, pp. 125--134
1992. Springer.

\bibitem{article-erdos}
Erd\"{o}s, Paul and Rubin, Arthur L and Taylor, Herbert,
\emph{Chosability in graphs}, Proceedings. West Coast Conference on Combinatorics, 
1979.

\bibitem{article-bonomo-1}
Bonomo, Flavia and Cecowski, Mariano,
\emph{Between coloring and list-coloring: $\mu$-coloring}, Electronic Notes in Discrete Mathematics,
2005.

\bibitem{bonomo2009exploring}
Bonomo, Flavia., Dur{\'a}n, Guillermo and Marenco, Javier,
\emph{Exploring the complexity boundary between coloring and list-coloring}, Annals of Operations Research,
2009, Springer.


\bibitem{book-chartrand}
Chartrand, Gary and Zhang, Ping,
\emph{Chromatic Graph Theory}, 2008, Chapman and Hall/CRC.


\bibitem{dias2014modelos}
Dias, Bruno Raphael Cardoso,
\newblock\emph{Modelos te{\'o}ricos e algoritmos para a otimiza{\c{c}}{\~a}o da aloca{\c{c}}{\~a}o de canais em redes m{\'o}veis sem fio}, Dissertation,
\newblock 2014, Federal University of the Amazonas.

\bibitem{dror1999complexity}
Dror, Moshe and Finke, Gerd and Gravier, Sylvain and Kubiak, Wieslaw,
\emph{On the complexity of a restricted list-coloring  problem.}, Graph Colouring and Applications, 
1999, American Mathematical Soc.

\bibitem{gama2016problemas}
Gama, Simone I. M.,
\newblock\emph{Sobre problemas de lista colora{\c{c}}{\~a}o e a propriedade de selecionabilidade em grafos}, Dissertation,
\newblock 2016, Federal University of the Amazonas.

\bibitem{book-garey}
Garey, Michael R and Johnson, David S,
\emph{Computers and Intractability: A Guide to the Theory of NP-Completeness.}, W.H.Freeman and Company, 
1979.

\bibitem{galvin1995list}
Galvin, Fred,
\emph{The list chromatic index of a bipartite multigraph}, Journal of Combinatorial Theory, Series B, 
1995, Academic Press.

\bibitem{golumbic2004algorithmic}
Golumbic, Martin Charles,
\emph{Algorithmic Graph Theory and Perfect Graphs.}, Academic Press, 
1980.

\bibitem{grotschel1981ellipsoid}
Gr{\"o}tschel, Martin and Lov{\'a}sz, L{\'a}szl{\'o} and Schrijver, Alexander,
\emph{The ellipsoid method and its consequences in combinatorial optimization.}, Combinatoria, 169--197,
1981, Springer.

\bibitem{hall1948distinct} 
Hall Jr, Marshall, 
\emph{\it Distinct Representatives of Subsets}, Bulletin of the American Mathematical Society, 
1948, pp. 922--926.

\bibitem{jansen1997generalized}
Jansen, Klaus and Scheffler, Petra,
\emph{Generalized colorings for tree-like graphs}, Discrete Applied Mathematics,
1997, 135--155, Elsevier.

\bibitem{konig1916graphen}
K{\"o}nig, D{\'e}nes,
\emph{\"{U}ber graphen und ihre anwendung auf determinantentheorie und mengenlehre}, Mathematische Annalen, 
1916, pp. 453--465, Springer.

\bibitem{kubale2004graph}
Kubale Marek,
\emph{Graph Colorings}, Contemporary Mathematics,
2004.

\bibitem{kratochvil1994algorithmic}
Kratochv{\'\i}l, Jan and Tuza, Zsolt,
\emph{Algorithmic complexity of list colorings}, Discrete Applied Mathematics, 	
1994, pp. 297--302, North-Holland.

\bibitem{article-choosable1}
Shen, Liang and Wang, Yingqian,
\emph{A sufficient condition for a planar graph to be 3-choosable}, Information Processing Letter, 146--151, 2007, Elsevier North-Holland, Inc.

\bibitem{thomassen1994every}
Thomassen Carsten,
\emph{Every Planar Graph is 5-choosable}, Journal of Combinatorial Theory Series B, pp. 180--181, 1994. Academic Press.

\bibitem{article-vizing}
Vizing, Vadim G.,
\emph{Vertex coloring of a graph with assigned colors}, Metody Diskret. Analiz. pp. 3--10  (in Russian),
1976.

\bibitem{article-choosable2}
Wang, Yingqian and Lu, Huajing and Chen, Ming,
\emph{A note on 3-choosability of planar graphs}, Information Processing Letter, 206--211, 2007, Elsevier.

\bibitem{woodall2001list}
Woodall, Douglas R.,
\emph{List colourings of graphs}, London Mathematical Society Lecture Note Series, pp. 269--301
2001, Cambridge University Press.


\bibitem{article-choosable}
Zhang, Haihui and Sun, Zhiren,
\emph{On 3-choosability of planar graphs without certain cycles}, Information Processing Letter, 2008, Elsevier.





\end{thebibliography}



\renewcommand{\refname}{6. Reference}

\end{document}